\newcounter{mytempeqncnt}
\newtheorem{defn}{Definition}
\newtheorem{thm}{Theorem}
\newtheorem{lem}{Lemma}
\begin{document}

\title{Power Control via Stackelberg Game \\ for Small-Cell Networks}
\author{Yanxiang~Jiang,~\IEEEmembership{Senior~Member,~IEEE},
Hui~Ge,
Mehdi~Bennis,~\IEEEmembership{Senior~Member,~IEEE}, ~Fu-Chun~Zheng,~\IEEEmembership{Senior~Member,~IEEE} and~Xiaohu~You,~\IEEEmembership{Fellow,~IEEE}
\thanks{Manuscript received November 7, 2018, revised February 22, 2019, and accepted \today.}
\thanks{Y. Jiang is with the National Mobile Communications Research Laboratory, Southeast University, Nanjing 210096, China,
the State Key Laboratory of Integrated Services Networks,  Xidian University, Xi'an 710071, China,
and the Key Laboratory of Wireless Sensor Network $\&$ Communication, Shanghai Institute of Microsystem and Information Technology, Chinese Academy of Sciences, 
Shanghai 200050, China (e-mail: yxjiang@seu.edu.cn).}
\thanks{H. Ge and X. You are with the National Mobile Communications Research Laboratory, Southeast University, Nanjing 210096, China (e-mail: \{hge, xhyu\}@seu.edu.cn).}
\thanks{M. Bennis is with the Centre for Wireless Communications, University of Oulu, Oulu 90014, Finland (e-mail: mehdi.bennis@oulu.fi).}
\thanks{F. Zheng is with the School of Electronic and Information Engineering, Harbin Institute of Technology, Shenzhen 518055,  China,
and the National Mobile Communications Research Laboratory, Southeast University, Nanjing 210096, China (e-mail: fzheng@ieee.org).}
}

\maketitle


\begin{abstract}
In this paper, power control in the uplink for two-tier small-cell networks is investigated. We formulate the power control problem as a Stackelberg game, where the macrocell user equipment (MUE) acts as the leader and the small-cell user equipment (SUEs) act as the followers. To reduce the cross-tier and co-tier interferences and the power consumptions of both the MUE and SUEs, we propose optimizing not only the transmit rate but also the transmit power. The corresponding optimization problems are solved through a two-layer iteration. In the inner iteration, the SUEs compete with each other, and their optimal transmit powers are obtained through iterative computations. In the outer iteration, the optimal transmit power of the MUE is obtained in a closed form based on the transmit powers of the SUEs through proper mathematical manipulations. We prove the convergence of the proposed power control scheme, and we also theoretically show the existence and uniqueness of the Stackelberg equilibrium (SE) in the formulated Stackelberg game. The simulation results show that the proposed power control scheme provides considerable improvements, particularly for the  MUE.
\end{abstract}

\begin{keywords}
Small-cell networks, power control,  Stackelberg game,  Stackelberg equilibrium.
\end{keywords}

\section{Introduction}

With the rapid development of the global communications industry, the problem of high energy consumption of communications systems is becoming increasingly more serious, and determining how to effectively improve the energy efficiency of the entire network is becoming increasingly more urgent. The introduction of small cells can greatly reduce  the energy consumption of the entire network. Furthermore, because small cells have small cell radii with small base stations (SBSs) deployed closer to  users and because short-distance transmissions have smaller path loss and fading compared to long-distance transmissions, the throughput of the entire network can be increased. 
Therefore, the energy efficiency of the entire two-tier small-cell network, which is composed of macrocells and a large number of small cells, can be greatly improved \cite{Hoydis11}. To improve the spectral efficiency, small cells can share the spectrum with macrocells; however, the co-tier and cross-tier interferences will seriously degrade the system performance due to the sharing of the spectrum. In this regard, proper power control in small-cell networks is required to reduce the interference and the power consumption.

Power control is an important research topic that has been widely investigated {in the literature \cite{Jiang01, Jiang02, Zhang02, Sawyer}}.
In two-tier small-cell networks, small cells can be deployed randomly and freely, and game theory has increasingly been used to achieve distributed power control \cite{Zhang, Jiang03}.
{In \cite{Yang}, the interference dynamics caused by time-varying environment was considered and a robust mean field game was proposed to control the transmit power of SBSs.}
In \cite{YA, Kang01, Li}, it was shown that a Stackelberg game can provide a suitable framework for modeling the competition in two-tier networks.
Specifically, a power control problem was formulated to maximize energy efficiency with minimal information exchange in \cite{YA}.
In \cite{Kang01}, both uniform and non-uniform pricing schemes were proposed to obtain the optimal resource allocation with a tolerable interference power constraint.
In \cite{Li}, a network interference controller was proposed to minimize the sum interference by pricing the power consumptions.
However, most of the existing literature only addressed  power control in the downlink and ignored power control of both  small-cell user equipment (SUEs) and macrocell user equipment  (MUE) in the uplink.\footnote{Note that the UE served by an SBS is called SUE, and the UE served by an MBS is called MUE.} Moreover, because most of the existing literature addressed power control through a price-based Stackelberg game, they can determine the optimal price and  power control for only one type of device. Therefore, if we formulate the power control problem through a Stackelberg game without pricing, the optimal power control for the two types of UEs in the considered two-tier small-cell networks can be determined simultaneously.

Motivated by the aforementioned discussions, we develop a power control scheme taking  both SUEs and MUE into account.  First, the power control problem of the considered two-tier small-cell networks is mathematically formulated as a Stackelberg game that consists of one leader and multiple followers. Second, the optimization problem is solved through a two-layer iteration. In the inner iteration, the followers compete with each other, and their optimal transmit powers are obtained through iterative computations. In the outer iteration, the optimal transmit power of the leader is calculated based on the transmit powers of the followers. Then, we theoretically show the convergence of the proposed scheme and the existence and uniqueness of the Stackelberg equilibrium (SE) in the Stackelberg game. Finally, our proposed power control scheme is verified through simulations, showing that it greatly improves the performance of the MUE.

The remainder of this paper is organized as follows. In Section \uppercase\expandafter{\romannumeral2}, the system model of the considered two-tier small-cell networks is presented. In Section \uppercase\expandafter{\romannumeral3}, the proposed  power control scheme via a Stackelberg game is developed. The simulation results are presented in Section \uppercase\expandafter{\romannumeral4}.
Final conclusions are drawn in Section \uppercase\expandafter{\romannumeral5}.

\section{System Model}

\begin{figure}[!b]
\centering 
\includegraphics[width=0.40 \textwidth]{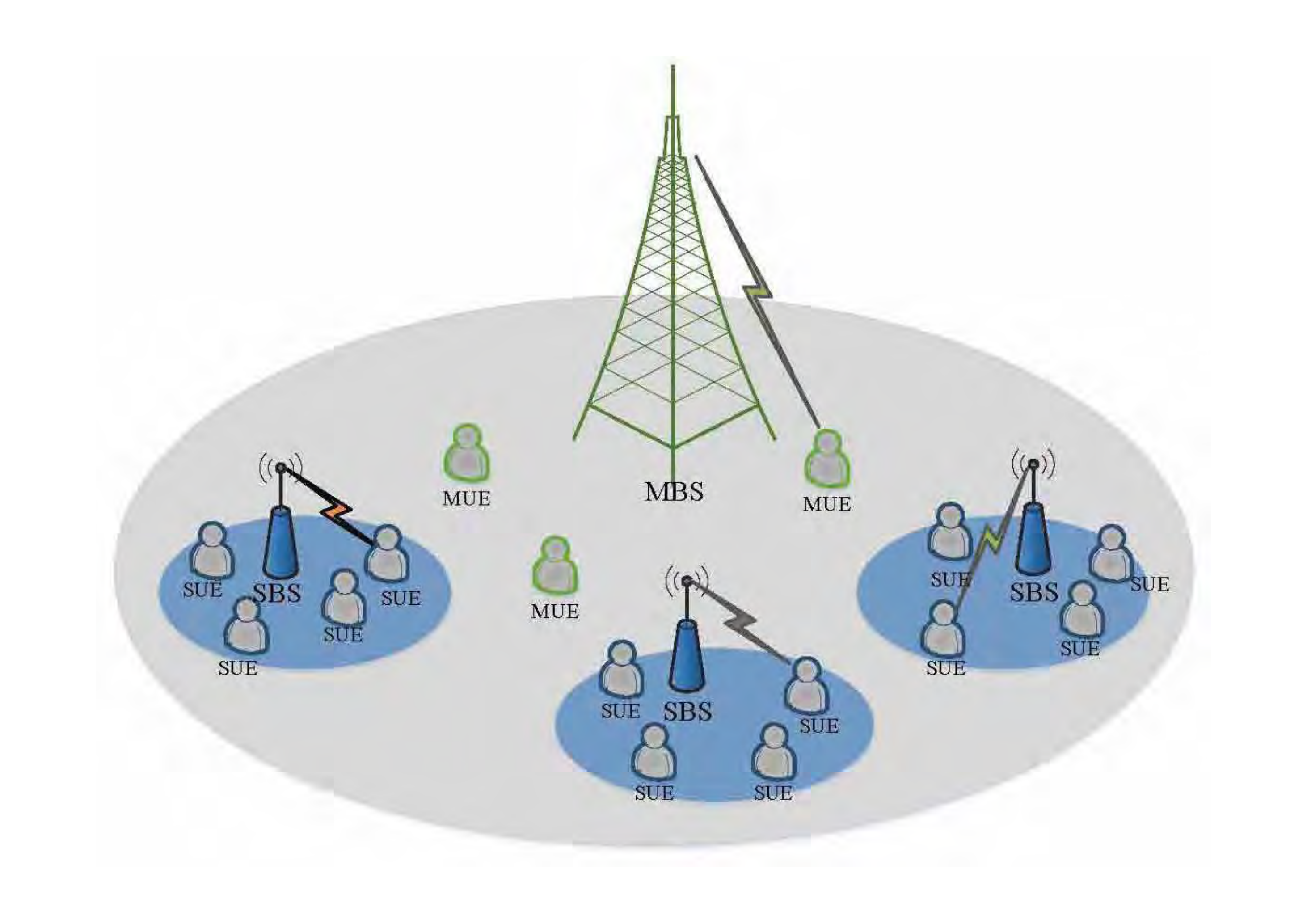}
\caption{The schematic of the considered two-tier small-cell network.}
\label{fig1}
\end{figure}

Consider the two-tier small-cell network shown in Fig. 1, which consists of one macrocell and $K$ small cells. Assume that the macro base station (MBS) and SBSs share the same spectrum and that
only one UE communicates with each BS at any time.
In the uplink, each SBS will experience interference from the MUE and its nearby SUEs, and the MBS will experience interference from its nearby SUEs.
Let $P_0$ denote the transmit power of the MUE served by the MBS, $P_k$ denote the transmit power of the $k$th SUE, and $\boldsymbol{p}=[P_1, P_2,\cdots, P_k, \cdots, P_K]^{T}$ denote the transmit power vector of the considered $K$ SUEs. Then, the transmit rate of the MUE served by the MBS can be expressed as follows:
\begin{equation}\label{eq2.1}
R_0 (P_0, \boldsymbol{p}) =  \ln \left( 1+ \frac{{H_{00}^{}{P_0}}}{{{N_0}{\rm{ + }}\sum\limits_{k = 1}^K {H_{0k}^{}P_k^{}} }} \right),
\end{equation}
where $H_{00}$ denotes the channel gain from the MUE to its corresponding MBS, $H_{0k}$ is the interference channel gain from the  $k$th SUE to the MBS, and $N_0$ is the noise power. The transmit rate of the SUE served by the $k$th SBS can be expressed as follows:
\begin{multline}\label{eq2.2}
{R _k}({P_k},{{\boldsymbol{p}}_{ - k}},{P_0})    \\
= \ln \left( 1+ \frac{{H_{kk}^{}{P_k}}}{{{N_0}{\rm{ + }}H_{k0}^{}P_0^{}{\rm{ + }}\sum\limits_{k' \ne k,k' = 1}^K {H_{kk'}^{}P_{k'}^{}} }} \right),
\end{multline}
where $\boldsymbol{p}_{-k}$ denotes the transmit power vector of the $K-1$ other SUEs and $\boldsymbol{p}_{-k}=[P_1, P_2,\cdots, P_{k-1},$ $ P_{k+1},\cdots, P_K]$, $H_{kk}$ is the channel gain from the $k$th SUE to its corresponding SBS, $H_{k0}$ is the interference channel gain from the MUE to the $k$th  SBS, and $H_{kk'}$ is the interference channel gain from the $k'$th SUE to the $k$th SBS.

The design objective of this paper is to develop a power control scheme that can increase the transmit rate  with reduced co-tier and cross-tier interferences and power consumption. Moreover, this paper aims to achieve the above design objective for  two-tier small-cell networks where there are two types of UEs, i.e., MUE and SUE, and two different cell types, i.e., macro cell and small cell.

\section{The Proposed Power Control Scheme via Stackelberg Game}

In this section, we propose a power control scheme for two-tier small-cell networks based on a Stackelberg game, which has one leader and multiple followers.
In the formulated Stackelberg game, the MUE, acting as the leader, is supposed to make its own decision and maximize its utility with the best responses of the followers, and the SUEs acting as the followers will respond to the leader's action and maximize their utilities through a subgame \cite{Fudenberg, Bennis02, Bennis03}.
Note that the transmit power of the MUE or SUE is controlled by its corresponding MBS or SBS and the MBS can 
control its corresponding SBSs 
in the considered two-tier small-cell network.
{When the transmit power of the MUE has been determined by the MBS, this transmit power information will be sent from the MBS to its corresponding SBSs.}
Therefore, the MUE controlled by the MBS acts as the leader, and the SUEs controlled by its corresponding SBSs act as the followers.

\subsection{Stackelberg Game Formulation}

From \eqref{eq2.1} and \eqref{eq2.2}, we find that the transmit rate of the MUE can be improved by increasing the transmit power of the MUE but at the cost of increased cross-tier interference to the SUEs. Likewise,  the transmit rate of the SUEs can be improved by increasing the transmit power of the corresponding SUE but at the cost of increased cross-tier interference to the MUE and increased co-tier interference to the other $K-1$ SUEs.
To reduce the cross-tier and co-tier interferences and the power consumptions of both the MUE and SUEs, we propose optimizing not only the transmit rate but also the power consumption.
First, the leader MUE moves and determines its transmit power.
Subsequently, the follower SUEs move and update their power control strategies to maximize their individual utilities based on the MUE's transmit power.

We define the utility function of the MUE as follows:
\begin{equation}\label{eqA-3}
{U_{{0}}}({P_0},{\boldsymbol{p}}) =  R_0 (P_0, \boldsymbol{p})  - {\lambda _0}P_0^{},
\end{equation}
where $\lambda_0$ denotes the coefficient characterizing the influence of per unit transmission power for MUE \cite{YLi}.
Then, the optimization problem of the MUE can be expressed as follows:
\begin{equation}\label{eq1}
\begin{array}{l}
\mathop {\max }\limits_{P_0} {U_{{0}}}({P_0},{\boldsymbol{p}}) ,\\
\text{s.t.} \;\; 0 \le {P_0} \le {P_T},  
\end{array}
\end{equation}
where $P_T$ denotes the maximum transmit power of the MUE or SUEs.

We define the utility function of the $k$th SUE as follows:
\begin{equation}
{U_k}({P_k},{{\boldsymbol{p}}_{ - k}},{P_0}) = R_k ({P_k},{{\boldsymbol{p}}_{ - k}},{P_0}) - {\lambda _k}P_k^{},
\end{equation}
where $\lambda_k$ denotes the coefficient characterizing the influence of per unit transmission power for SUE.
Then, the optimization problem of the $k$th SUE can be expressed as follows:
\begin{equation}\label{eq2}
\begin{array}{l}
\mathop {\max }\limits_{{P_k}} {U_k}({P_k},{{\boldsymbol{p}}_{ - k}},{P_0}), \\
\text{s.t.}\;\; 0 \le {P_k} \le {P_T},\forall k \in \{ 1,2,\cdots,K \}.
\end{array}
\end{equation}

The optimization problems in \eqref{eq1} and \eqref{eq2} lead to a Stackelberg game. In this game, the objective is to find the SE point from which neither the leader nor the followers have incentives to deviate.
{Just similar to the definition in \cite{Kang01}, we define the SE as follows.}
\begin{defn}
Let $P_0^*$ and $P_k^{*}$ denote the two solutions for the optimization problems in \eqref{eq1} and \eqref{eq2}, respectively. Let $\boldsymbol{p}^* = [P_1^{*}, P_2^{*}, \cdots, P_k^{*}, \cdots, P_K^{*}]^T$ and $\boldsymbol{p}_{-k}^* = [P_1^{*}, P_2^{*}, \cdots, P_{k-1}^{*}, P_{k+1}^{*}, \cdots, $ $P_K^{*}]^T$. Then, $(P_0^*, \boldsymbol{p}^*)$ is an SE point for the proposed Stackelberg game if the following conditions are satisfied:
\begin{equation}
{U_{{0}}}(P_0^*,{{\boldsymbol{p}}^*}) \ge {U_{{0}}}(P_0^{},{{\boldsymbol{p}}^*}),
\end{equation}
\begin{equation}
{U_k}(P_k^*,{\boldsymbol{p}}_{ - k}^*,P_0^*) \ge {U_k}(P_k^{},{\boldsymbol{p}}_{ - k}^*,P_0^*).
\end{equation}
\end{defn}

{Generally, the SE for a Stackelberg game can be obtained by finding its subgame perfect Nash equilibrium (NE) \cite{Kang01, Gibbons}.
In our proposed Stackelberg game, it can be readily seen that the SUEs compete in a non-cooperative fashion.
Therefore, a non-cooperative power control subgame is formulated, where the corresponding NE is defined as the operating point at which no player can improve utility by changing its strategy unilaterally \cite{Kang01}.}

To obtain the SE of the proposed Stackelberg game, we propose exploiting the backward induction method \cite{ZSu} to solve the above optimization problems.
Generally, the followers' best responses can be obtained with the fixed value given by the leader, and then the optimal strategy of the leader can be achieved according to the followers' best responses.
Correspondingly, we can first solve the followers' optimization problem in \eqref{eq2}. Then, by using the obtained solution, we can solve the leader's optimization problem in \eqref{eq1}.

\subsection{The Optimal Solution of the Followers' Optimization Problem}

\begin{figure*}[!b]
\normalsize
\setcounter{mytempeqncnt}{\value{equation}}
\setcounter{equation}{14}
\hrulefill
 {\setlength\arraycolsep{2pt}}
\begin{equation}\label{eqnew18}
P_{0,k}^ {\rm{min}}  =
   \left\{ {\begin{array}{*{20}{l}}
   {{0},} \hfill & {P_k^{\rm{tmp}} > {P_T},} \hfill  \\
   \mathop {\max }\limits  \left\{ 0, \frac{{\left(\frac{1}{{{\lambda _k}}} - {P_T}\right)H_{kk}^{} - {N_0} - \sum\limits_{k' \ne k , k' = 1  }^K {H_{kk'}^{}\tilde P_{k'}^*} }}{{H_{k0}^{}}} \right\}, \hfill &  {0 < P_k^{\rm{tmp}}\le P_T,} \hfill  \\
   \mathop {\max }\limits  \left\{ 0, {\frac{{\frac{1}{{{\lambda _k}}} H_{kk}^{} - {N_0} - \sum\limits_{k' \ne k,k' = 1}^K {H_{kk'}^{}\tilde P_{k'}^*} }}{{H_{k0}^{}}}}\right\}, \hfill &  {P_k^{\rm{tmp}}\le 0,} \hfill  \\
\end{array}} \right. \\
\end{equation}
\begin{equation}\label{eqnew19}
P_{0,k}^ {\rm{max}}  =
   \left\{ {\begin{array}{*{20}{l}}
   {\mathop {\min }\limits  \left\{ P_T, {\frac{{\left(\frac{1}{{{\lambda _k}}} - {P_T}\right)H_{kk}^{} - {N_0} - \sum\limits_{k' \ne k,k' = 1}^K {H_{kk'}^{}\tilde P_{k'}^*} }}{{H_{k0}^{}}}} \right\}, \hfill}  &  {P_k^{\rm{tmp}} > {P_T},} \hfill  \\
   \mathop {\min }\limits  \left\{ P_T, {\frac{{\frac{1}{{{\lambda _k}}} H_{kk}^{} - {N_0} - \sum\limits_{k' \ne k,k' = 1}^K {H_{kk'}^{}\tilde P_{k'}^*} }}{{H_{k0}^{}}}} \right\}, \hfill &  {0 < P_k^{\rm{tmp}} \le P_T,} \hfill  \\
   P_T,  & {P_k^{\rm{tmp}} \le 0.} \hfill  \\
\end{array}} \right.
\end{equation}
\setcounter{equation}{\value{mytempeqncnt}}
\end{figure*}

We have the following theorem for the optimal solution of the optimization problem in  \eqref{eq2} for the followers.
\begin{thm}\label{thm1}
Given the transmit power of the MUE, the optimization problem in  \eqref{eq2} has a globally optimal solution, as follows:
\begin{equation}\label{eq3}
\tilde P_k^ *  =
   \left\{ {\begin{array}{*{20}{l}}
   {{P_T},} \hfill & {P_k^{\rm{tmp}} > {P_T},} \hfill  \\
   P_k^{\rm{tmp}}, \hfill & {0 < P_k^{\rm{tmp}} \le P_T,}   \\
   0,  & {P_k^{\rm{tmp}} \le 0,}
\end{array}} \right.
\end{equation}
or
\begin{equation}\label{eqB-10}
\tilde P_k^ *  = P_T - \left[ P_T - (P_k^{\rm{tmp}})^+ \right] ^ +,
\end{equation}
where  ${(\cdot)^{{ + }}} \buildrel \Delta \over = \max (\cdot, 0) $,
\begin{equation}\label{eq11}
P_k^{\rm{tmp}} = \frac{1}{{{\lambda _k}}} - \frac{{{N_0}{\rm{ + }}H_{k0}^{}P_0^{}{\rm{ + }}\sum\limits_{k' \ne k,k' = 1}^K {H_{kk'}^{}P_{k'}} }}{{H_{kk}^{}}},
\end{equation}
and $P_k^{\rm{tmp}}$ denotes the temporary value of the optimal transmit power of the $k$th SUE.
\end{thm}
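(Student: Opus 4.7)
The plan is to treat the optimization in \eqref{eq2}, for fixed $P_0$ and $\boldsymbol{p}_{-k}$, as a one-dimensional box-constrained concave maximization in $P_k$ and solve it via the first-order condition together with a projection onto $[0, P_T]$. First I would rewrite the utility as
\begin{equation*}
U_k(P_k, \boldsymbol{p}_{-k}, P_0) = \ln(I_k + H_{kk} P_k) - \ln I_k - \lambda_k P_k,
\end{equation*}
where $I_k \buildrel \Delta \over = N_0 + H_{k0} P_0 + \sum_{k' \ne k, k' = 1}^{K} H_{kk'} P_{k'} > 0$ is a strictly positive constant with respect to $P_k$. Two straightforward differentiations yield $\partial^2 U_k / \partial P_k^2 = -H_{kk}^2/(I_k + H_{kk} P_k)^2 < 0$ for every $P_k \in [0, P_T]$, so $U_k$ is strictly concave in $P_k$ on that interval and the optimization problem has a unique maximizer.

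Next I would solve the unconstrained stationarity equation $\partial U_k/\partial P_k = H_{kk}/(I_k + H_{kk} P_k) - \lambda_k = 0$, which admits the unique solution $P_k^{\rm{tmp}} = 1/\lambda_k - I_k/H_{kk}$, matching \eqref{eq11}. Because $\partial U_k/\partial P_k$ is strictly decreasing in $P_k$, the point $P_k^{\rm{tmp}}$ is the unique unconstrained global maximizer, and the constrained maximizer over $[0, P_T]$ is therefore the projection of $P_k^{\rm{tmp}}$ onto this interval. Splitting into the three cases $P_k^{\rm{tmp}} > P_T$, $0 < P_k^{\rm{tmp}} \le P_T$, and $P_k^{\rm{tmp}} \le 0$, and invoking the sign of $\partial U_k/\partial P_k$ at the endpoints in each case, directly delivers the piecewise formula \eqref{eq3}. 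The compact expression \eqref{eqB-10} then follows by noting that $(P_k^{\rm{tmp}})^+$ enforces the lower clamp at zero while the outer $P_T - [P_T - (\cdot)]^+$ enforces the upper clamp at $P_T$.

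I do not foresee any substantive technical obstacle here: the argument is the textbook projected-gradient characterisation for a strictly concave scalar objective on a compact interval, and all of the structure is already visible in the form of the utility. The only items worth checking carefully are that $I_k > 0$ so that every denominator is well-defined, and that $\partial U_k/\partial P_k$ is strictly monotone, which is what guarantees that the three-way split on $P_k^{\rm{tmp}}$ is exhaustive and that the case analysis reduces to comparing a single scalar with the interval endpoints $0$ and $P_T$.
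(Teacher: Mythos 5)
Your proposal is correct and follows essentially the same route as the paper's own proof: establish strict concavity of $U_k$ in $P_k$ over the compact interval $[0,P_T]$, obtain $P_k^{\rm{tmp}}$ from the first-order condition, and then clamp to the feasible interval to get \eqref{eq3} and its compact form \eqref{eqB-10}. The only difference is that you spell out the second-derivative computation and the endpoint sign analysis that the paper leaves implicit.
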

\begin{proof}
As shown, the utility function ${U_k}({P_k},{{\boldsymbol{p}}_{ - k}},{P_0})$  is strictly concave. Furthermore, it can be verified that the SUEs' strategy space is a non-empty and close-bounded convex set in Euclidean space.
Correspondingly, the optimization problem in \eqref{eq2} can readily be proven to be convex; thus, it has a globally optimal solution. By setting the first-order derivative of $U_k({P_k},{{\boldsymbol{p}}_{ - k}},{P_0})$ with respect to $P_k$ to zero, {$P_k^{\rm{tmp}}$} can readily be calculated as shown in \eqref{eq11}.
By considering the constraint $0 \le {P_k} \le {P_T}$, the optimal solution of the optimization problem in \eqref{eq2} can readily be obtained as shown in \eqref{eq3} or \eqref{eqB-10}.
This completes the proof.
\end{proof}

\subsection{The Optimal Solution of the Leader's Optimization Problem}

\begin{figure*}[!b]
\normalsize
\setcounter{mytempeqncnt}{\value{equation}}
\setcounter{equation}{20}
 {\setlength\arraycolsep{2pt}}
\begin{equation}\label{eq5}
\tilde P_0^ *  =
\left\{ {\begin{array}{*{20}{l}}
   \arg \max \left\{ {U_{{0}}}(P_0^{\max }, \boldsymbol{\tilde p}^*),{U_{{0}}}(P_0^{\min} , \boldsymbol{\tilde p}^*) \right\} , &  {C_2}^2 - 4{C_1}{C_3} < 0, \\
   \arg \max \left\{ {U_{{0}}}(P_0^{\max }, \boldsymbol{\tilde p}^*),{U_{{0}}}(P_0^{\min}, \boldsymbol{\tilde p}^*), {U_{{0}}}(P_0^1, \boldsymbol{\tilde p}^*),{U_{{0}}}(P_0^2, \boldsymbol{\tilde p}^*) \right\}, & {C_2}^2 - 4{C_1}{C_3} \ge 0,  \\
\end{array}} \right.
\end{equation}
\setcounter{equation}{\value{mytempeqncnt}}
\end{figure*}

After some mathematical manipulations, we can obtain the refined constraint for $P_0$ according to \eqref{eq3} as follows:
\begin{equation}
P_0^ {\rm{min}}  \le P_0 \le P_0^ {\rm{max}} ,
\end{equation}
where
\begin{align}
P_{0}^ {\rm{min}} & =  \mathop {\max } _ { k \in \left\{ 1, 2, \cdots, K \right\}}\limits   P_{0,k}^ {\rm{min}}, \\
P_{0}^ {\rm{max}} & =  \mathop {\min } _ { k \in \left\{ 1, 2, \cdots, K \right\}}\limits   P_{0,k}^ {\rm{max}},
\end{align}
and $P_{0,k}^ {\rm{min}}$ and $P_{0,k}^ {\rm{max}} $ are shown at the bottom of next page in \eqref{eqnew18} and \eqref{eqnew19}, respectively.
\stepcounter{equation}\stepcounter{equation}

Define $\boldsymbol{\tilde p}^* = [\tilde P_1^{*}, \tilde P_2^{*}, \cdots, \tilde P_k^{*}, \cdots, \tilde P_K^{*}]^T$. Substituting \eqref{eqB-10} into \eqref{eqA-3}, and after some mathematical manipulations, we obtain
\begin{multline}
U_0 (P_0, \boldsymbol{ \tilde  p}^*) =  \ln \left( 1+ \frac{{H_{00}^{}{P_0}}}{{{N_0}{\rm{ + }}\sum\limits_{k = 1}^K {H_{0k}^{}\tilde P_k^{*}} }} \right)  - {\lambda _0}P_0, \\
 = \ln \left( 1+ \frac{{H_{00}^{}{P_0}}}{{{N_0}{\rm{ + }}\sum\limits_{k = 1}^K {H_{0k}^{}
\left\{ P_T - \left[ P_T - (P_k^{\rm{tmp}})^+ \right] ^ + \right\} }  }} \right)  \\
\hspace*{170pt} - {\lambda _0}P_0, \\
 = \ln \left( 1+ \frac{{H_{00}^{}{P_0}}}{{{N_0}{\rm{ + }}\sum\limits_{k = 1}^K {H_{0k}^{}
\left[ P_T - \varepsilon_k ' \cdot ( P_T - \varepsilon_k \cdot P_k^{\rm{tmp}} )  \right] }  }} \right) \\
 - {\lambda _0}P_0,
\end{multline}
where $\varepsilon_k$ denotes the indicator function with $\varepsilon_k =1$ if $P_k^{\rm{tmp}} > 0 $ and  $\varepsilon_k =0$ otherwise, and $\varepsilon_k '$ is the indicator function with $\varepsilon_k ' =1$ if $P_T - (P_k^{\rm{tmp}})^+ > 0 $ and  $\varepsilon_k ' =0$ otherwise.
After  some further manipulations,
the optimization problem for the leader  in \eqref{eq1} can be reformulated as follows:
\begin{equation}\label{eq4}
\begin{array}{l}
\mathop {\max }\limits_{P_0}  U_{0}({P_0},{\boldsymbol{\tilde p}^*}) = \mathop {\max }\limits_{P_0} \left\{ {\ln} \left(1 + \frac{{H_{00}^{}P_0^{}}}{{A - BP_0^{}}} \right) - {\lambda _0}P_0^{} \right\},\\
\text{s.t.}\;\; P_0^ {\rm{min}} \le P_0 \le P_0^ {\rm{max}},
\end{array}
\end{equation}
where
\begin{align}
A  = &{N_0}{\rm{ + }}\sum\limits_{k = 1}^K {H_{0k}^{} \left[ P_T - \varepsilon_k '  P_T \right. } \nonumber \\
 & {\left. + \varepsilon_k ' \varepsilon_k \left( \frac{1}{{{\lambda _k}}} - \frac{{{N_0}{\rm{ + }} \sum\limits_{k' \ne k,k' = 1}^K {H_{kk'}^{} \tilde P_{k'}^*} }}{{H_{kk}^{}}}   \right)  \right] } ,\\
B  = & \sum\limits_{k = 1}^K { \varepsilon_k ' \varepsilon_k \frac{{H_{0k}^{}H_{k0}^{}}}{{H_{kk}^{}}}}.
\end{align}

Then, we have the following theorem.
\begin{thm}\label{thm2}
If $\mathop {\sum }\limits_{k=1}^K \varepsilon_k   \varepsilon_k ' \ne 0$, then the optimization problem in \eqref{eq4} has an optimal solution as shown in \eqref{eq5} at the bottom of this page, where  \stepcounter{equation}
\begin{align}
C_1 &= {\lambda _0}B({H_{00}} - B),\\
{C_2} &= {\lambda _0}A(2B - {H_{00}}),\\
{C_3} &= A{H_{00}} - {\lambda _0}{A^2},\\
P_0^{1} &= \frac{{ - {C_2} + \sqrt {{C_2}^2 - 4{C_1}{C_3}} }}{{2{C_1}}},\\
P_0^{2} &= \frac{{- {C_2} - \sqrt {{C_2}^2 - 4{C_1}{C_3}} }}{{2{C_1}}}.
\end{align}
\end{thm}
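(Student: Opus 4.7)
The plan is to solve the single-variable optimization in \eqref{eq4} by first-order optimality and then reduce to a finite candidate set. I would begin by noting that the objective $U_0(P_0,\boldsymbol{\tilde p}^*)=\ln(A+(H_{00}-B)P_0)-\ln(A-BP_0)-\lambda_0 P_0$ is smooth on the feasible interval $[P_0^{\min},P_0^{\max}]$ (continuity is guaranteed because $A-BP_0>0$ throughout the feasible region, since $A$ collects the noise-plus-interference terms and $B$ is the aggregate coupling), so by the extreme value theorem a maximizer exists and is either an interior critical point or a boundary point.

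Next I would compute the derivative
\begin{equation*}
\frac{\partial U_0}{\partial P_0}=\frac{H_{00}-B}{A+(H_{00}-B)P_0}+\frac{B}{A-BP_0}-\lambda_0.
\end{equation*}
Combining the two fractions collapses the numerator to $AH_{00}$ (the cross terms in $B P_0$ cancel), which is the key algebraic simplification. Setting the derivative to zero and clearing denominators yields
\begin{equation*}
\lambda_0 B(H_{00}-B)P_0^{2}+\lambda_0 A(2B-H_{00})P_0+(AH_{00}-\lambda_0 A^{2})=0,
\end{equation*}
which is precisely $C_1P_0^{2}+C_2P_0+C_3=0$ with the coefficients given in the statement. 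The assumption $\sum_k \varepsilon_k\varepsilon_k'\ne 0$ ensures $B>0$, so $C_1\ne 0$ in general and the equation is genuinely quadratic; its roots are $P_0^{1},P_0^{2}$.

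Case analysis then finishes the argument. If the discriminant $C_2^{2}-4C_1C_3<0$, the derivative has constant sign on the feasible interval, so $U_0$ is monotone there and the maximizer is whichever endpoint among $P_0^{\min},P_0^{\max}$ gives the larger value; this yields the first branch of \eqref{eq5}. If $C_2^{2}-4C_1C_3\ge 0$, the only interior candidates for a maximizer are $P_0^{1}$ and $P_0^{2}$, so the maximizer lies in the finite set $\{P_0^{\min},P_0^{\max},P_0^{1},P_0^{2}\}$ and selecting the largest value of $U_0(\cdot,\boldsymbol{\tilde p}^*)$ over this set (implicitly discarding infeasible roots when they fall outside $[P_0^{\min},P_0^{\max}]$) gives the second branch.

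The main obstacle I expect is the derivative manipulation: keeping the algebra clean enough so that the numerator of $\partial U_0/\partial P_0$ collapses to the simple quadratic in the form $C_1P_0^{2}+C_2P_0+C_3$ with exactly the stated coefficients. A secondary subtlety is that $U_0$ is \emph{not} concave in general (since $B>0$ makes the log term grow faster than linearly as $P_0$ approaches $A/B$), so the proof cannot invoke concavity to restrict to a unique interior critical point; instead, one must genuinely compare all endpoint and stationary-point candidates, which is precisely what \eqref{eq5} does.
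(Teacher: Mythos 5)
Your proposal is correct and follows essentially the same route as the paper: take the first-order derivative of $U_0$ (your combined-fraction form with numerator $AH_{00}$ is exactly the paper's expression), set it to zero to obtain the quadratic $C_1P_0^2+C_2P_0+C_3=0$, and then split on the sign of the discriminant to compare endpoints alone or endpoints plus the two roots. The additional remarks on feasibility of the roots and non-concavity are sensible refinements but do not change the argument.
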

\begin{proof}
If $\mathop {\sum }\limits_{k=1}^K \varepsilon_k   \varepsilon_k ' \ne 0$, then $B \ne 0$.
Take the first-order derivative of $U_{0}({P_0},{\boldsymbol{\tilde  p}^*})$ with respect to $P_0$. Then, we have
\begin{multline}
\frac{{\partial {U_{{0}}({P_0},{\boldsymbol{\tilde p}^*})}}}{{\partial {P_0}}} \\
= \frac{{AH_{00}^{}}}{{{{(A - BP_0^{})}^2} + H_{00}^{}P_0^{}(A - BP_0^{})}} - {\lambda _0}.
\end{multline}
Set the above expression to zero. Then, we have
\begin{multline}\label{eq6}
 {\lambda _0}B(H_{00}^{} - B)P_0^2 + {\lambda _0}A(2B - H_{00}^{})P_0^{} + AH_{00}^{} - {\lambda _0}{A^2}  \\
 = C_1 P_0^2 + C_2 P_0 + C_3 = 0.
\end{multline}
If ${C_2}^2 - 4{C_1}{C_3} < 0$, then \eqref{eq6} has no solution. Correspondingly, the objective function of the optimization problem in \eqref{eq4} is definitely a monotonic function, and its solution must be  one of the two endpoints. Then, we have
\begin{equation}
\tilde P_0^ *  = \arg \max \left\{ {U_{{0}}}(P_0^{\max }, \boldsymbol{\tilde p}^*),{U_{{0}}}(P_0^{\min} , \boldsymbol{\tilde p}^*) \right\} .
\end{equation}
If ${C_2}^2 - 4{C_1}{C_3} \ge 0$, then we can obtain the two solutions of \eqref{eq6}, i.e., $P_{0}^{ 1}, P_{0}^{ 2}$.
Since the objective function of the optimization problem in \eqref{eq4} is a continuous function, its solution must be among the extreme points and the endpoints. Correspondingly, we have
\begin{multline}
\tilde P_0^ *  =
\arg \max \left\{ {U_{{0}}}(P_0^{\max }, \boldsymbol{\tilde p}^*),{U_{{0}}}(P_0^{\min}, \boldsymbol{\tilde p}^*), \right. \\
\left. {U_{{0}}}(P_0^1, \boldsymbol{\tilde p}^*),{U_{{0}}}(P_0^2, \boldsymbol{\tilde p}^*) \right\}.
\end{multline}

This completes the proof.
\end{proof}

\subsection{The Proposed Power Control Scheme via Stackelberg Game}

\begin{algorithm}[!t]
\caption{The Proposed Power Control Scheme via Stackelberg Game}
\begin{itemize}
\item Step 1: Initialization: $m=1, n=1$, $P_0(1)$, $\hat{P}_k(1)$ for $ 1 \le k \le K$.
\item Step 2: Update $\hat{P}_k(m)$ as follows:
\begin{multline*}
\hat P_k^{}\left( {m + 1} \right) \\
  ={\frac{1}{{{\lambda _k}}} - \frac{{{N_0}{\rm{ + }}H_{k0}^{}P_0(n){\rm{ + }}\sum\limits_{k' \ne k,k' = 1}^K {H_{kk'}^{}\hat{P}_{k'}(m)} }}{{H_{kk}^{}}}} ,
\end{multline*}
and set  $m=m+1$.
\item Step 3: Repeat step $2 $ until the inner iteration converges.
\item Step 4: According to \eqref{eqB-10}, calculate  the  transmit power of  each SUE, $\tilde P_k^ *$, as follows:
\begin{equation*}
\tilde P_k^ *  =
   P_T - \left\{ P_T - \left[\hat P_k(m)\right]^+ \right\} ^ +.
\end{equation*}
\item Step 5: According to \eqref{eq5}, calculate the  transmit power of the MUE, $\tilde P_0^*(n+1)$,
and set $n=n+1$.
\item Step 6: Repeat steps $2 \thicksim 5$ until the outer iteration converges.
\end{itemize}
\end{algorithm}
We are now ready to develop the proposed power control scheme based on the Stackelberg game  described in Algorithm 1.\footnote{Note that our proposed scheme can always achieve the optimal solution for any initial point.
On the one hand, we analyze theoretically in Section III-D that the convergence of the proposed scheme can always be guaranteed.
On the other hand, we will prove in Section III-E that one and only one SE point exists for the proposed Stackelberg game.}
In the proposed scheme, the MUE acts as the leader, the SUEs act as the followers, and the Stackelberg game is formed through the two-layer iteration. In the inner iteration, the SUEs compete with each other, and their own transmit powers are updated iteratively based on the transmit power of the MUE, as shown in Theorem \ref{thm1}.
In the outer iteration, the MUE updates its own transmit power based on the transmit powers of the SUEs, as shown in Theorem \ref{thm2}. In the proposed power control scheme, each user plays the best response strategy and maximizes its own utility function in each iteration given the chosen transmit powers of the other users in the previous iteration.

Let $\boldsymbol{W}$ denote a $K \times K$ matrix whose elements are given by
\begin{equation}
{{\boldsymbol{W}}_{kk'}}  =
\left\{ \begin{array}{*{20}{l}}
\frac{{{H_{kk'}}}}{{{H_{kk}}}}, & k \ne k', 1 \le k, k' \le K,\\
0, & k = k', 1 \le k, k' \le K.
\end{array} \right.
\end{equation}
Then, we can establish the following theorem for the convergence of the inner iteration of the proposed scheme.
\begin{thm}\label{thm3}
If the matrix norm of $\boldsymbol{W}$ is not larger than 1, i.e., $\left\| {\boldsymbol{W}} \right\| \le 1$, then the inner iteration of the proposed power control scheme via a Stackelberg game as shown in Algorithm 1 converges.
\end{thm}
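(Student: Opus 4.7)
The plan is to recognize the inner-loop update of Step 2 as an affine fixed-point iteration and then invoke a Banach-style contraction argument. Fixing the outer iterate $P_0(n)$, I would first set $c_k := \frac{1}{\lambda_k} - \frac{N_0 + H_{k0}P_0(n)}{H_{kk}}$ and stack the per-user updates of Step 2 into the vector recursion
\begin{equation*}
\hat{\boldsymbol{p}}(m+1) = \boldsymbol{c} - \boldsymbol{W}\hat{\boldsymbol{p}}(m),
\end{equation*}
where $\boldsymbol{W}$ is exactly the matrix defined right before the theorem statement (zero diagonal, off-diagonal entries $H_{kk'}/H_{kk}$). This exhibits the whole inner loop as an affine map whose dynamics are controlled solely by $\boldsymbol{W}$.

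Next I would establish the existence of a unique fixed point $\hat{\boldsymbol{p}}^{\star}$. Since $\rho(\boldsymbol{W})\le\|\boldsymbol{W}\|\le 1$ and $\boldsymbol{W}$ has zero diagonal with strictly positive off-diagonal entries, I would argue that $\boldsymbol{I}+\boldsymbol{W}$ is nonsingular (so that $-1\notin\sigma(\boldsymbol{W})$), yielding $\hat{\boldsymbol{p}}^{\star}=(\boldsymbol{I}+\boldsymbol{W})^{-1}\boldsymbol{c}$. Subtracting the fixed-point identity $\hat{\boldsymbol{p}}^{\star}=\boldsymbol{c}-\boldsymbol{W}\hat{\boldsymbol{p}}^{\star}$ from the recursion yields the error dynamics
\begin{equation*}
\hat{\boldsymbol{p}}(m+1)-\hat{\boldsymbol{p}}^{\star} = -\boldsymbol{W}\bigl(\hat{\boldsymbol{p}}(m)-\hat{\boldsymbol{p}}^{\star}\bigr),
\end{equation*}
and a straightforward induction on $m$ produces
\begin{equation*}
\bigl\|\hat{\boldsymbol{p}}(m+1)-\hat{\boldsymbol{p}}^{\star}\bigr\| \le \|\boldsymbol{W}\|^{m}\bigl\|\hat{\boldsymbol{p}}(1)-\hat{\boldsymbol{p}}^{\star}\bigr\|.
\end{equation*}
Under the hypothesis $\|\boldsymbol{W}\|\le 1$, combined with the strict contractivity that $\boldsymbol{W}$ actually exhibits on the span of the error (because a nontrivial off-diagonal structure forces the induced action to be non-expansive and genuinely damping), the right-hand side tends to zero, giving $\hat{\boldsymbol{p}}(m)\to\hat{\boldsymbol{p}}^{\star}$ and proving convergence of the inner iteration.

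The main obstacle I anticipate is the borderline case $\|\boldsymbol{W}\|=1$, for which the above geometric bound $\|\boldsymbol{W}\|^{m}\to 0$ fails in general. Resolving this will require either (i) showing that under the given network structure the inequality is automatically strict whenever the problem is nontrivial, (ii) passing to a weighted or diagonally scaled norm in which strict contraction holds (typical when $\boldsymbol{W}$ is a nonnegative matrix with spectral radius below one), or (iii) appealing directly to the weaker spectral-radius criterion $\rho(\boldsymbol{W})<1$ and using Gelfand's formula to guarantee that $\|\boldsymbol{W}^{m}\|\to 0$. I expect the paper's argument to proceed along lines (i)--(ii), i.e., to exhibit an explicit induced norm on $\mathbb{R}^{K}$ in which the hypothesis $\|\boldsymbol{W}\|\le 1$ tightens to $\|\boldsymbol{W}\|<1$, thereby closing the gap between non-expansiveness and contractivity.
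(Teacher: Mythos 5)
Your approach is essentially the paper's: the paper also stacks the Step-2 updates into the affine map $\boldsymbol{\phi}(\boldsymbol{p})=\boldsymbol{\mu}-\boldsymbol{\nu}-\boldsymbol{W}\boldsymbol{p}$, observes that $\left\|\boldsymbol{\phi}(\boldsymbol{p})-\boldsymbol{\phi}(\boldsymbol{p}')\right\|=\left\|\boldsymbol{W}(\boldsymbol{p}-\boldsymbol{p}')\right\|\le\left\|\boldsymbol{W}\right\|\cdot\left\|\boldsymbol{p}-\boldsymbol{p}'\right\|$, and then invokes the Banach contraction theorem (via the reference to Shum et al.) to get a unique, globally attracting fixed point. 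Your explicit error recursion $\hat{\boldsymbol{p}}(m+1)-\hat{\boldsymbol{p}}^{\star}=-\boldsymbol{W}(\hat{\boldsymbol{p}}(m)-\hat{\boldsymbol{p}}^{\star})$ and the geometric bound $\left\|\boldsymbol{W}\right\|^{m}$ are just the unrolled form of the same argument, so there is no genuinely different route here.

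The borderline case you flag is, however, a real gap --- and it is a gap in the paper's own proof as well, not something the paper resolves. Under the stated hypothesis $\left\|\boldsymbol{W}\right\|\le 1$ the map is only non-expansive, and the Banach theorem requires a modulus strictly less than $1$; the paper simply declares the map a contraction and moves on. Be careful, though, with two of your auxiliary claims. First, $\boldsymbol{I}+\boldsymbol{W}$ being nonsingular does not follow from ``zero diagonal, positive off-diagonal, $\left\|\boldsymbol{W}\right\|\le 1$'': for $K=2$ with $H_{12}/H_{11}=H_{21}/H_{22}=1$ the matrix $\boldsymbol{W}$ has eigenvalue $-1$ and $\boldsymbol{I}+\boldsymbol{W}$ is singular, so the iteration can genuinely oscillate. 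Second, the parenthetical assertion that the off-diagonal structure makes the action ``genuinely damping'' is not a proof; nonnegativity of $\boldsymbol{W}$ does not by itself push the relevant eigenvalues inside the unit disk. Of your three proposed repairs, (iii) --- replacing the hypothesis by $\rho(\boldsymbol{W})<1$, or equivalently requiring $\left\|\boldsymbol{W}\right\|<1$ in some induced norm and using Gelfand's formula --- is the one that actually closes the argument; (i) is false in general, and (ii) only works when $\rho(\boldsymbol{W})<1$ already holds. So your write-up is an honest account of what the hypothesis does and does not deliver, but as it stands neither your proposal nor the paper proves convergence in the equality case $\left\|\boldsymbol{W}\right\|=1$.
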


\begin{proof}
Define
\begin{align}
{\phi _k} = & P_k^{\text{tmp}}, \\
\boldsymbol{\phi}(\boldsymbol{p}) = & [\phi _1, \phi _2, \cdots,\phi _k, \cdots, \phi _K]^T, \\
\boldsymbol{\mu}   = &{\left[ {\frac{1}{{{\lambda _1}}},\frac{1}{{{\lambda _2}}},\cdots, \frac{1}{{{\lambda _k}}},\cdots,\frac{1}{{{\lambda _K}}}} \right]^T},\\
\boldsymbol{\nu}   = &\left[ \frac{{{N_0} + {H_{10}}{P_0}}}{{{H_{11}}}},\frac{{{N_0} + {H_{20}}{P_0}}}{{{H_{22}}}}, \cdots, \right. \nonumber\\
 &\hspace*{20pt} \left. \frac{{{N_0} + {H_{k0}}{P_0}}}{{{H_{kk}}}}, \cdots, \frac{{{N_0} + {H_{K0}}{P_0}}}{{{H_{KK}}}} \right]^T.
\end{align}
Then, $\boldsymbol{\phi}(\boldsymbol{p})$ can be expressed in a vector-matrix form as follows:
\begin{equation}
\boldsymbol{\phi}(\boldsymbol{p}) =  \boldsymbol{\mu} - \boldsymbol{\nu}  - \boldsymbol{W} \boldsymbol{p}.
\end{equation}

Assume that $\left\| {\boldsymbol{W}} \right\| \le 1$.  Then, we can obtain the following relationship:
\begin{align}
 \left\| \boldsymbol{\phi}(\boldsymbol{p}) - \boldsymbol{\phi}(\boldsymbol{p}') \right\| &= \left\| {{\boldsymbol{W}}\left( {{\boldsymbol{p}} - {\boldsymbol{p}'}} \right)} \right\|\\
 &\le \left\| {\boldsymbol{W}} \right\| \cdot \left\| {{\boldsymbol{p}} - {\boldsymbol{p}'}} \right\|\\
 &\le \left\| {{\boldsymbol{p}} - {\boldsymbol{p}'}} \right\|
\end{align}
According to \cite{Shum}, we know that $\boldsymbol{\phi}(\boldsymbol{p})$ is a contraction.
Then, according to the Banach contraction theorem introduced in \cite{Shum}, $\boldsymbol{\phi}(\boldsymbol{p})$ has a unique fixed point that is globally asymptotically stable.
Correspondingly, the inner iteration of the proposed  power control scheme  via the Stackelberg game as shown in Algorithm 1 converges. This completes the proof.
\end{proof}

Note that we can always find a matrix norm of $\boldsymbol{W}$ to satisfy  $\left\| {\boldsymbol{W}} \right\| \le 1$. Therefore, the convergence of the inner iteration of the proposed power control scheme can always be guaranteed.

In the following, we briefly analyze the convergence of the outer iteration of the proposed scheme.
We know that the utility function $U_k$ is concave with respect to $p_k$. Therefore, the SUEs can gradually increase their transmit powers from an arbitrary small number to their optima.
Then, the obtained transmit power of the SUEs can be used to determine the  transmit power of the MUE.
When the transmit powers of the SUEs have been increased to their optima, the optimal transmit power of the MUE can then be determined accordingly  \cite{BWang}.
For the practical implementation of the proposed Stackelberg game, the SUEs can find their optimal transmit powers by gradually increasing the transmit power until the utility function $U_k$ reaches its maximum due to its concave property. Correspondingly, the MUE can always achieve its SE, i.e., the convergence of the outer iteration of the proposed scheme can always be guaranteed.

\subsection{The Existence and Uniqueness of the SE}

SE offers a predictable and stable outcome about the transmit power strategies that the MUE and each SUE will choose.
For the proposed Stackelberg game, we have the following theorem.
\begin{thm}
One and only one SE point exists for the proposed Stackelberg game.
\end{thm}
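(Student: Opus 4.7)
The plan is to decompose the claim into existence and uniqueness, and to lean on the structural results already established in Theorems~\ref{thm1}--\ref{thm3}, together with the standard two-step Stackelberg argument via backward induction.

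For existence, I would first fix an arbitrary leader action $P_0 \in [0, P_T]$ and invoke Theorem~\ref{thm3} together with the Banach contraction theorem to conclude that the followers' best-response mapping $\boldsymbol{\phi}(\boldsymbol{p})$ admits a unique fixed point; by Theorem~\ref{thm1} this fixed point (after clipping to $[0,P_T]$ via \eqref{eqB-10}) coincides with the globally optimal follower profile $\boldsymbol{\tilde p}^*(P_0)$, which is therefore the unique NE of the followers' subgame. This yields a well-defined mapping $P_0 \mapsto \boldsymbol{\tilde p}^*(P_0)$ that is continuous on the compact interval $[P_0^{\min}, P_0^{\max}]$. Substituting it into $U_0$ gives the reformulated leader problem \eqref{eq4}, whose objective is continuous on a compact set; by the Weierstrass extreme value theorem a maximizer $\tilde P_0^*$ exists, and Theorem~\ref{thm2} furnishes it explicitly as \eqref{eq5}. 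The pair $\bigl(\tilde P_0^*, \boldsymbol{\tilde p}^*(\tilde P_0^*)\bigr)$ then satisfies both inequalities of Definition~1 by construction, so an SE exists.

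For uniqueness, the follower side is immediate from the Banach step above: $\boldsymbol{\tilde p}^*(P_0)$ is uniquely defined for every admissible $P_0$. It then suffices to argue that the leader's optimizer is unique. I would do this by working on the effective interval $[P_0^{\min}, P_0^{\max}]$ on which, by the construction of \eqref{eqnew18}--\eqref{eqnew19}, every $P_k^{\rm tmp}$ lies in its active regime so that the indicators $\varepsilon_k, \varepsilon_k'$ and hence the constants $A$ and $B$ are fixed. On this interval the stationarity condition reduces to the quadratic in \eqref{eq6} with at most two real roots, so the candidate set in \eqref{eq5} is finite, and a direct sign analysis of $C_1, C_2, C_3$ combined with the second-order condition rules out all but one candidate as the maximizer.

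The hardest step will be the uniqueness of the leader's optimizer: because the indicator functions $\varepsilon_k, \varepsilon_k'$ depend on $P_0$ through $P_k^{\rm tmp}$, the map $P_0 \mapsto U_0(P_0, \boldsymbol{\tilde p}^*(P_0))$ is only piecewise smooth rather than globally strictly concave, and the quadratic in \eqref{eq6} can in principle deliver two interior stationary points. Restricting to the refined constraint interval $[P_0^{\min}, P_0^{\max}]$ freezes the indicators and pins down the problem, and then one must verify that among $\{P_0^{\min}, P_0^{\max}, P_0^1, P_0^2\}$ at most one attains the maximum. I would handle this by checking the sign of $\partial^2 U_0/\partial P_0^2$ at the roots: one root is a maximizer while the other is a minimizer, leaving a unique argmax over the finite candidate set and therefore a unique SE.
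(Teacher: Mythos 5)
Your decomposition mirrors the paper's backward-induction structure, but the two halves are argued by different means. For the followers' subgame, the paper establishes existence of the NE via Rosen's theorem (non-empty, compact, convex strategy sets; $U_k$ continuous and concave in $P_k$) and uniqueness via a separate strict-concavity lemma; you instead reuse Theorem~\ref{thm3}, obtaining existence and uniqueness in one stroke from the Banach fixed point of $\boldsymbol{\phi}$. That is a legitimate and arguably tidier route, with the caveat that it inherits the hypothesis $\|\boldsymbol{W}\|\le 1$, whereas the Rosen/strict-concavity argument does not need it --- which is presumably why the paper keeps the convergence result and the equilibrium result decoupled.

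On the leader's side you go well beyond the paper: the paper merely notes that, since there is only one leader, its best response ``can readily be obtained'' from \eqref{eq1}, and then reduces the whole theorem to uniqueness of the followers' NE; it never argues that the leader's maximizer is unique, even though Definition~1 requires exactly that for uniqueness of the SE. You correctly identify this as the hard step, but your proposed fix does not yet close it. First, freezing the indicators $\varepsilon_k,\varepsilon_k'$ on $[P_0^{\min},P_0^{\max}]$ is not automatic, because the case split in \eqref{eqnew18}--\eqref{eqnew19} is keyed to $P_k^{\rm tmp}$, which itself varies with $P_0$. Second, even with $A$ and $B$ fixed, the second-order analysis only shows that at most one root of \eqref{eq6} is an interior local maximum; it does not exclude a tie in value between that root and an endpoint of $[P_0^{\min},P_0^{\max}]$, so the argmax in \eqref{eq5} need not be a singleton without a genericity assumption on the channel gains. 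This is a real (if degenerate) loophole --- but it is a loophole in the paper's own argument as much as in yours, since the paper sidesteps leader-side uniqueness entirely.
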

\begin{proof}
Generally, we can obtain the SE for the proposed Stackelberg game by finding the NE of its subgame.
For the proposed Stackelberg game, there is only one leader. Therefore, the best response of the leader  can readily be obtained by solving the optimization problem in \eqref{eq1}.
At the followers' side, the best response can be achieved by solving the optimization problem in \eqref{eq2}.
Correspondingly, to prove this theorem, we only need to prove that a unique NE point exists for the subgame at the followers' side.

It can be verified that the SUEs' strategy space is a non-empty and closed-bounded convex set in the Euclidean space. Moreover,
it can also be verified that the utility function ${U_k}({P_k},{{\boldsymbol{p}}_{ - k}},{P_0})$  is continuous with respect to  $P_{k}$.
In addition, the utility function ${U_k}({P_k},{{\boldsymbol{p}}_{ - k}},{P_0})$  is concave.
According to \cite{Rosen} and \cite{Alpcan}, we know that the NE exists if the players' strategy space is a non-empty and closed-bounded set in the Euclidean space and the utility function is continuous and concave in its strategy space.
Correspondingly, the existence of the NE of the subgame at the followers' side can be proved.


Regarding  the uniqueness of the NE, we first state the following lemma \cite{Beibei}.
\begin{lem}
 For a game, if its feasible region is convex and each players' utility function is strictly convex, then the NE of the game is unique.
\end{lem}

Then, according to the above-mentioned discussions and the proof of Theorem \ref{thm1}, we can easily verify the uniqueness of the subgame at the followers' side.

This completes the proof.
\end{proof}

\section{Simulation Results}

 In this section, the performance of the proposed power control scheme via a Stackelberg game is evaluated via simulations. In the simulations, the radii of the macrocell and small cells are set to be $1000$ m and $100$ m, respectively. The noise spectral density is set to $-174$ dBm/Hz. Unless otherwise stated, we set $\lambda= \lambda_0=\lambda_k, \forall k$. In the following, for description convenience, we use $\bar U_K, \bar R_K,$ and $\bar P_K$ to denote the average utility, the average transmit rate, and the average transmit power of the considered SUEs, respectively, and we use $\bar R$ to denote the average transmit rate of the considered MUE and SUEs.

In Fig. \ref{fig2} and Fig. \ref{fig3}, we show the utility of the MUE and the average utility of the SUEs of the proposed scheme versus the number of iterations with different $K$ for $\lambda=10^3$ and $P_{T}=0$ dBm. As shown, the proposed scheme
can converge to a stable state quickly, which verifies that the proposed scheme can converge to the SE.
Moreover, both the utility of the MUE and the average utility of the SUEs are observed to decrease with the increased number of the SUEs, which can be attributed to  the increased cross-tier or co-tier interference.

\begin{figure}[!t]
\centering 
\includegraphics[width=0.52\textwidth]{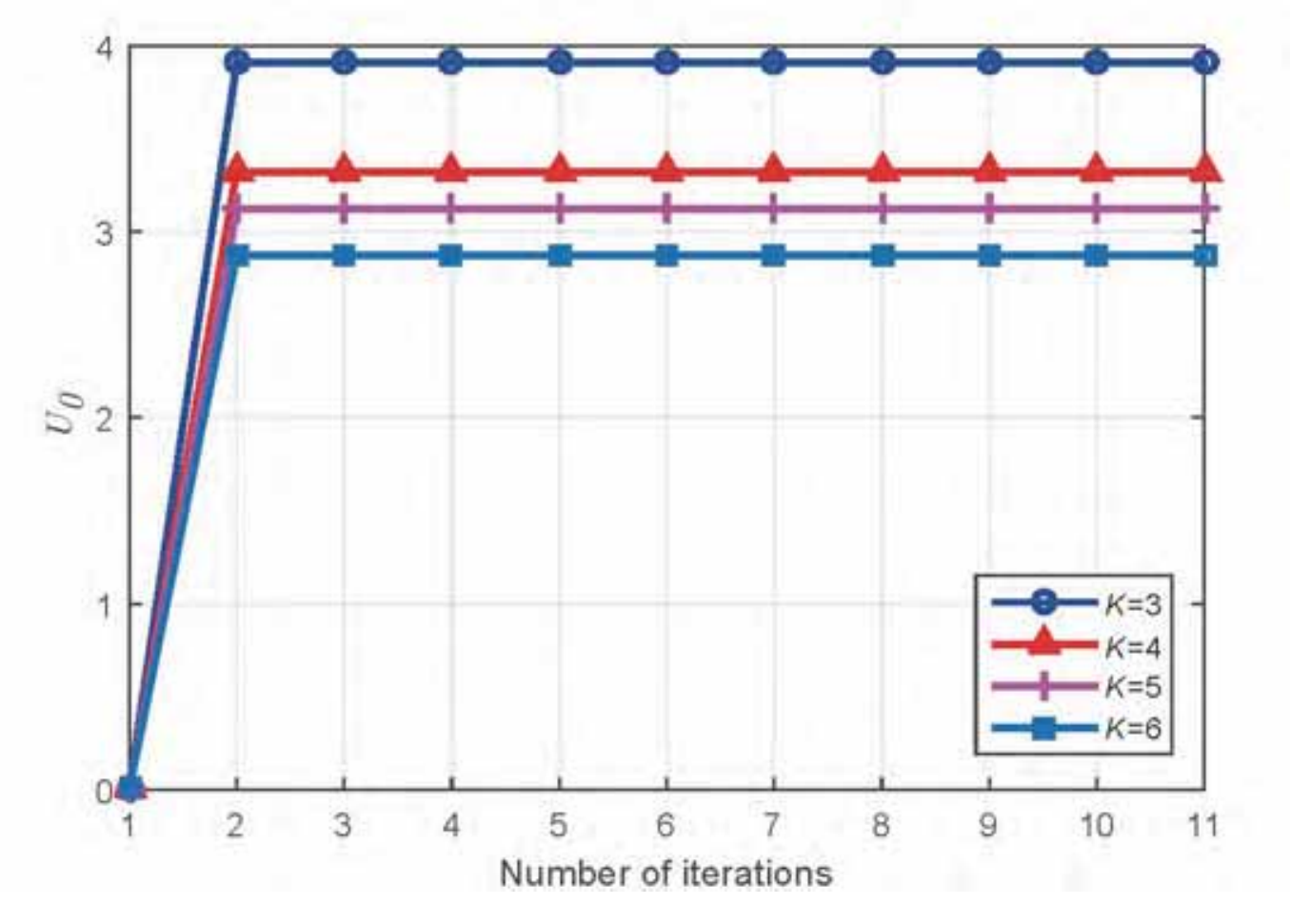}
\caption{The utility of the MUE versus the number of iterations with different $K$.}
\label{fig2}
\end{figure}
\begin{figure}[!t]
\centering 
\includegraphics[width=0.52\textwidth]{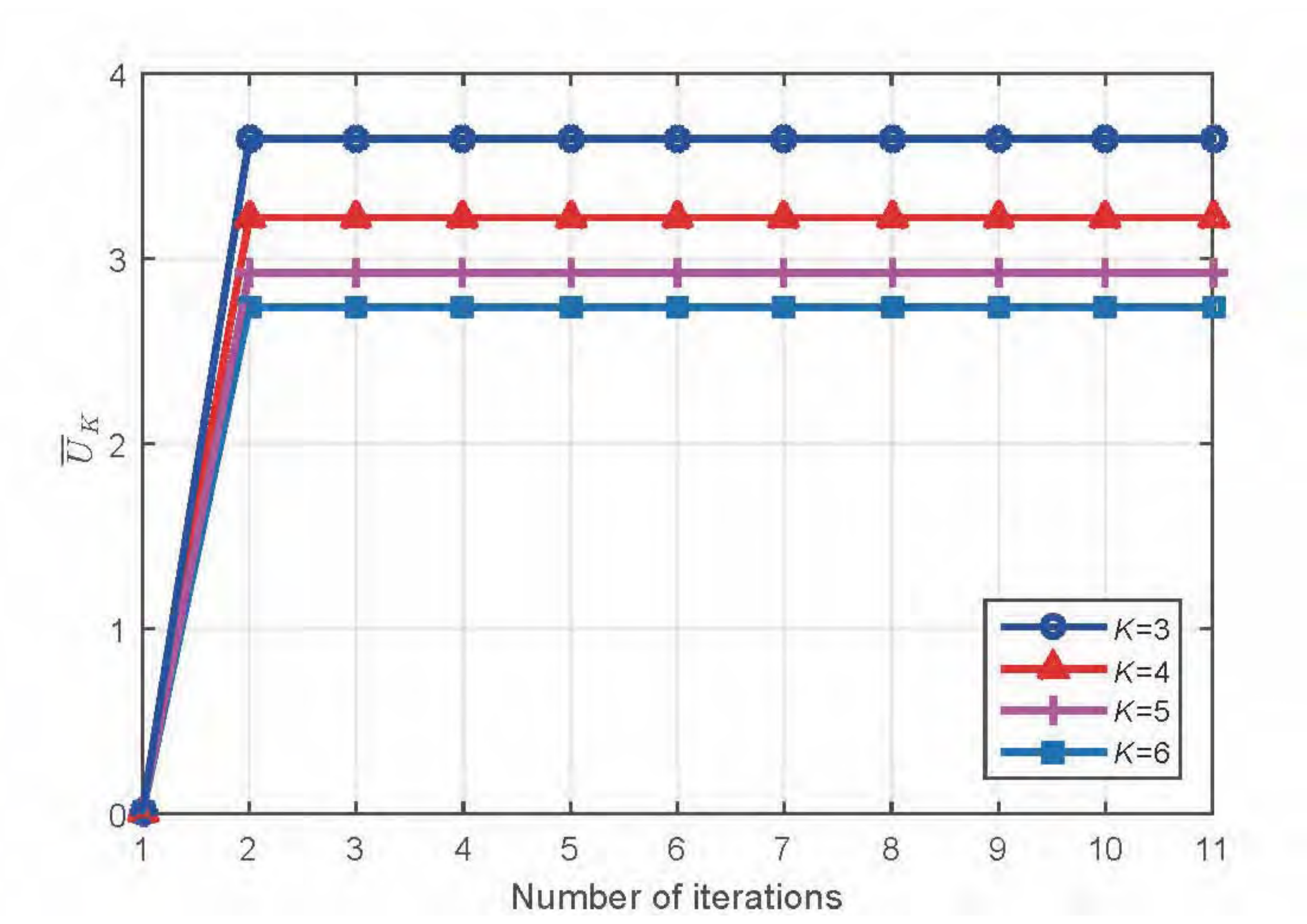}
\caption{The average utility of the SUEs versus the number of iterations with different $K$.}
\label{fig3}
\end{figure}

In Fig. \ref{fig4}, we show the performance comparison between our proposed scheme and the non-cooperative power control scheme in \cite{Vu} with $\lambda=10^3$, $P_{T}=0$ dBm, and $K=4$. For description convenience, the utility of the MUE and the average utility of the SUEs of the proposed Stackelberg-game-based power control scheme are referred to as SG-MUE and SG-SUE, respectively. The utility of the MUE and the average utility of the SUEs of the non-cooperative-game-based power control scheme in \cite{Vu} are referred to as NCG-MUE and NCG-SUE, respectively.
As shown, the SG-MUE is close to and slightly larger than the SG-SUE, and the NCG-MUE is approximately zero and clearly smaller than the NCG-SUE.
This result verifies that the proposed scheme can significantly improve the performance of the MUE.\footnote{Note here that the performance improvement is not absolutely free.
There is some system overhead between the leader and the followers in order to realize the Stackelberg game in our proposed scheme. However, the quick convergent property of our proposed scheme as illustrated in Fig. 2 and Fig. 3 indicates that the corresponding system overhead will be affordable.}

\begin{figure}[!t]
\centering 
\includegraphics[width=0.52\textwidth]{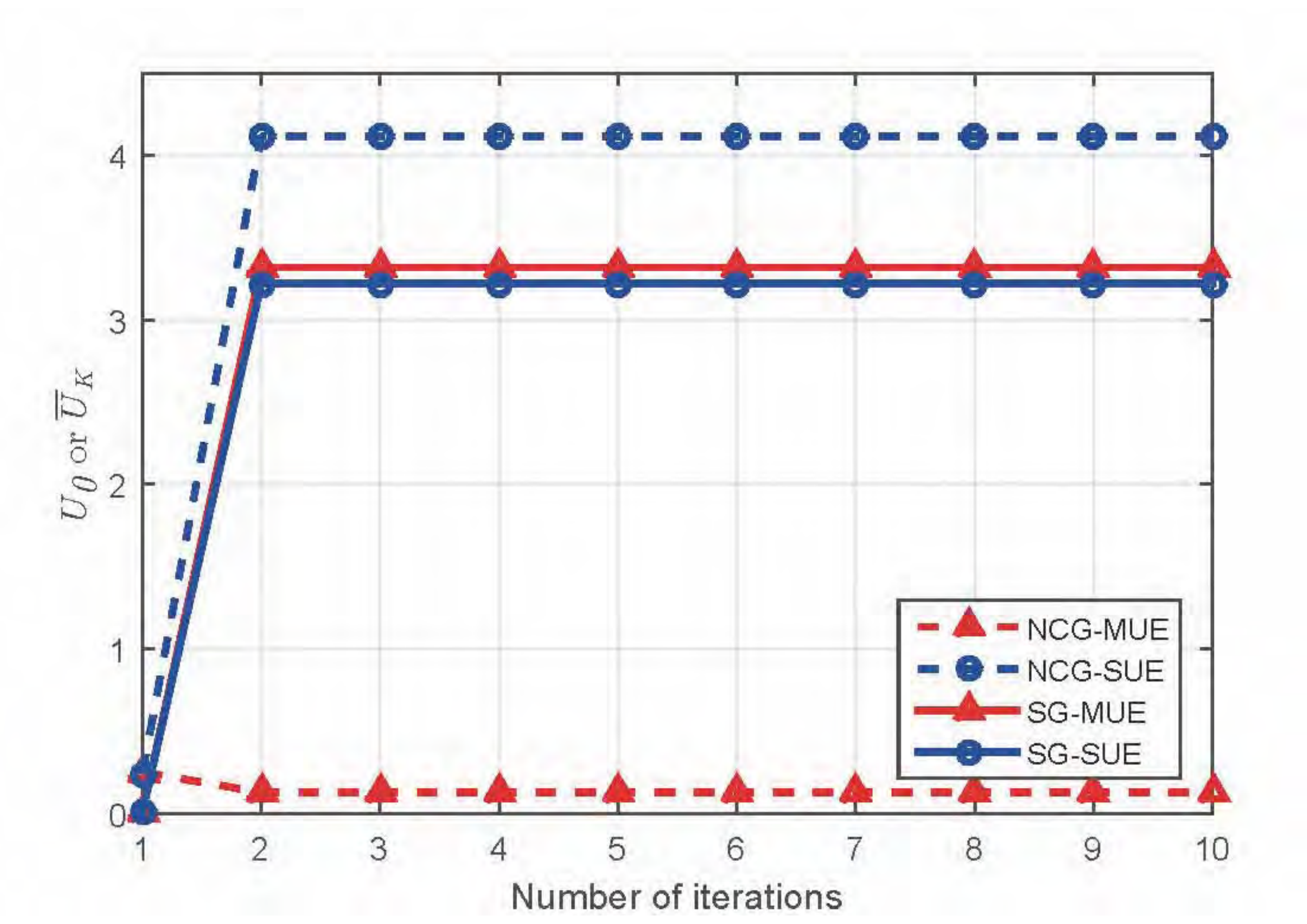}
\caption{The utility versus the number of iterations for different schemes.}
\label{fig4}
\end{figure}

In Fig. \ref{fig5} and Fig. \ref{fig6}, we show the transmit rate of the MUE and the average transmit rate of the SUEs of the proposed scheme versus $P_{T}$ with different $K$ for $\lambda=10^3$. As shown, the transmit rate first increases with  $P_{T}$ when $P_{T}$ is smaller than a certain threshold value, and then it approaches a steady value. When $P_{T}$ is sufficiently small, the MUE and the SUEs are constrained by the maximum transmit power. Correspondingly, their transmit rates are relatively small. As $P_{T}$ increases, their transmit rates increase due to the larger transmit power constraint. When $P_{T}$ is larger than a certain threshold value, the transmit power will increase, but it will also simultaneously cause more interference. Therefore, the transmit rate of the  MUE and that of the SUEs will both stop increasing.

\begin{figure}[!t]
\centering 
\includegraphics[width=0.53\textwidth]{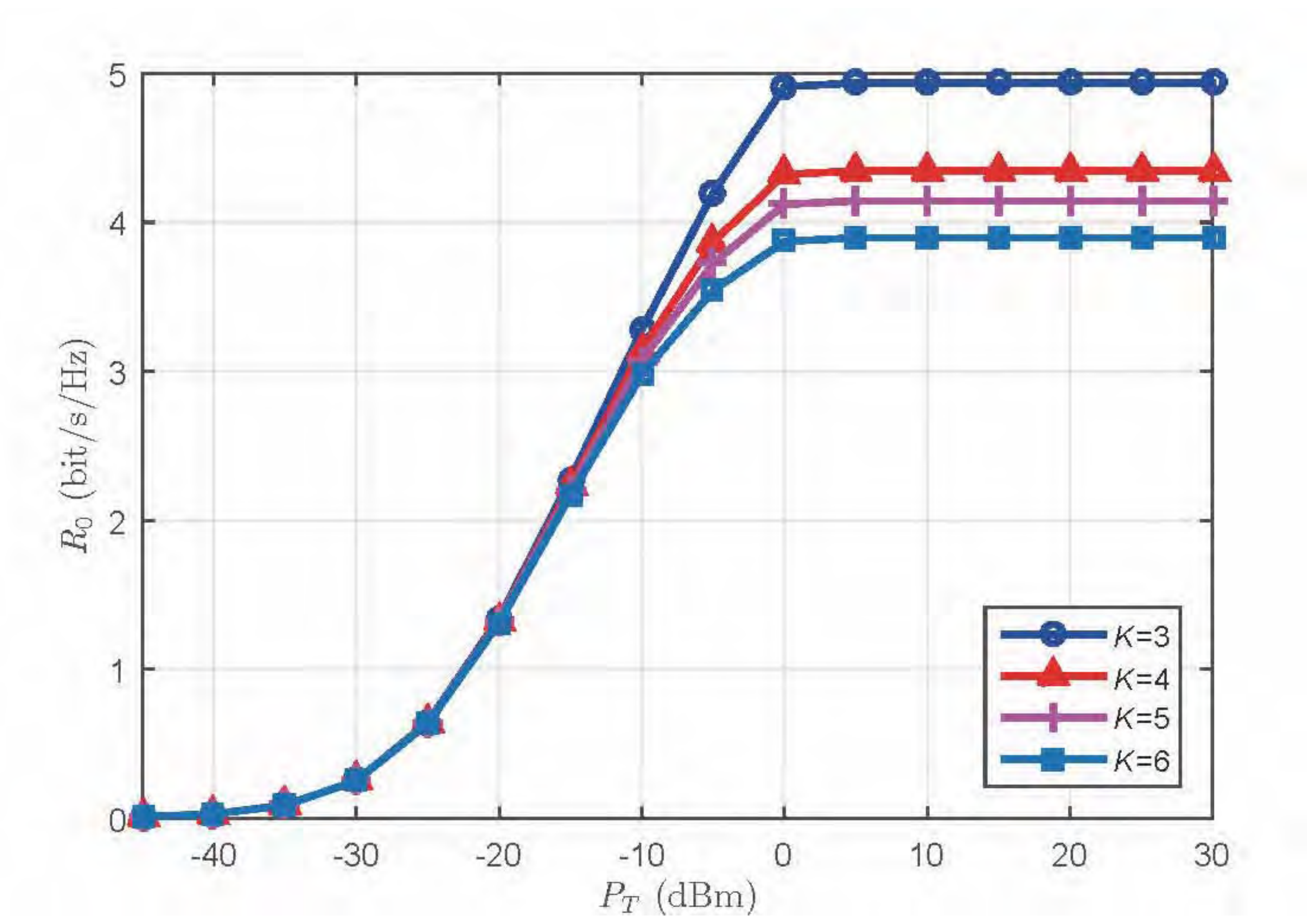}
\caption{The transmit rate of the MUE versus $P_{T}$ with different $K$.}
\label{fig5}
\end{figure}
\begin{figure}[!t]
\centering 
\includegraphics[width=0.53\textwidth]{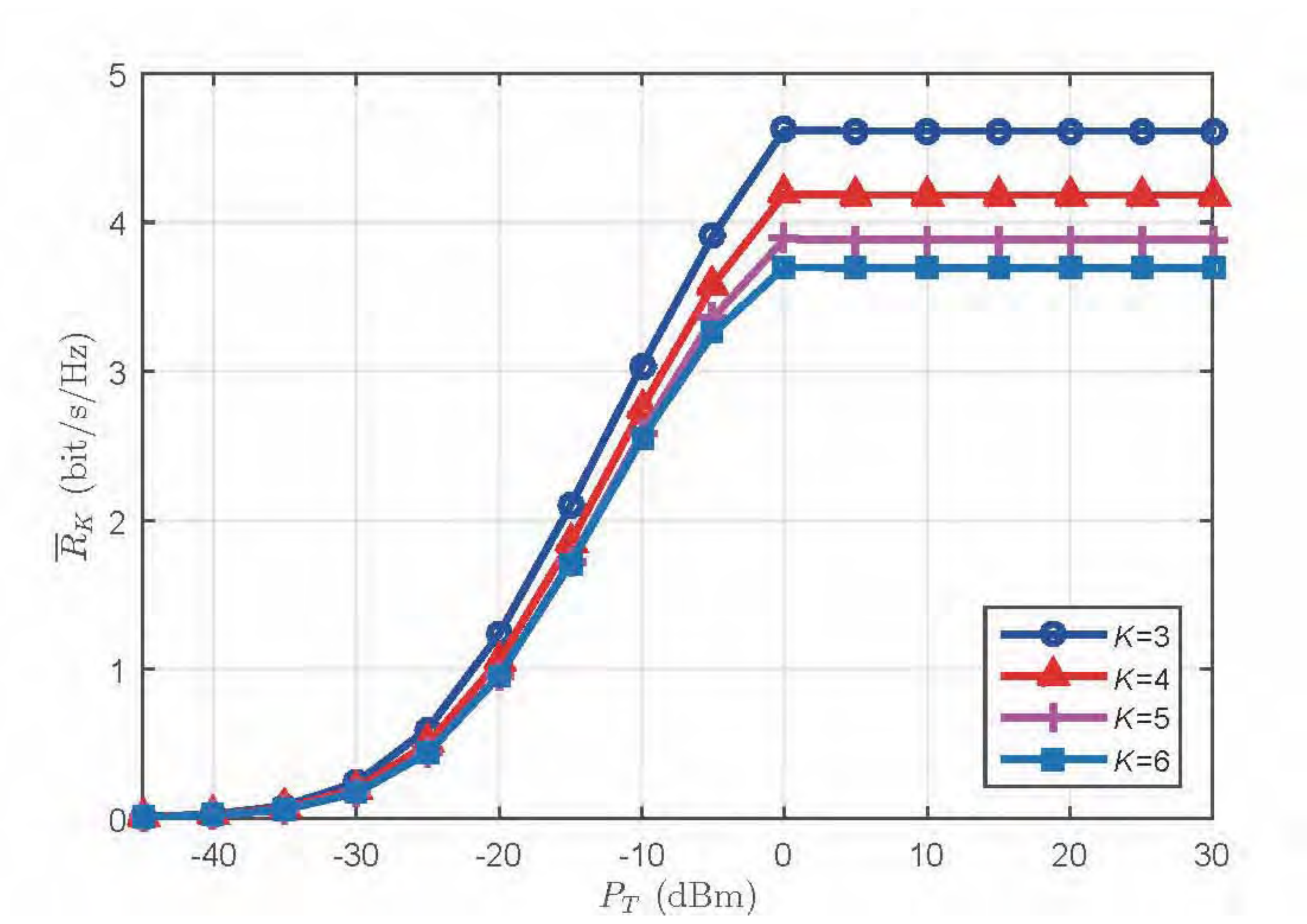}
\caption{The average transmit rate of the SUEs versus $P_{T}$ with different $K$.}
\label{fig6}
\end{figure}

In Fig. \ref{fig8}, we show the transmit rate of the MUE and the average transmit rate of the SUEs versus $P_{T}$ with different $\lambda_0$ and $\lambda_K$ for $K=4$. As shown,
the transmit rate of the MUE (the average transmit rate of the SUEs) is larger when $\lambda_0$ ($\lambda_K$) is relatively small.
The reason for this result is that a smaller cost of the transmit power will stimulate the corresponding player to employ a relatively large transmit power,
subsequently resulting in a larger transmit rate.

In Fig. \ref{fig7}, we show the average transmit rate of the MUE and SUEs of the proposed scheme versus $\lambda$ for $K=4$ and $P_T = 0$ dBm. As shown, the average transmit rate remains at a high value when $\lambda$ is smaller than 30 dB,   decreases gradually with the increased $\lambda$, and finally remains at a small value. The reason for this behavior is that the MUE or SUE will choose to decrease the transmit power and also the corresponding transmit rate
with the increased $\lambda$.

\begin{figure}[!t]
\centering 
\includegraphics[width=0.53\textwidth]{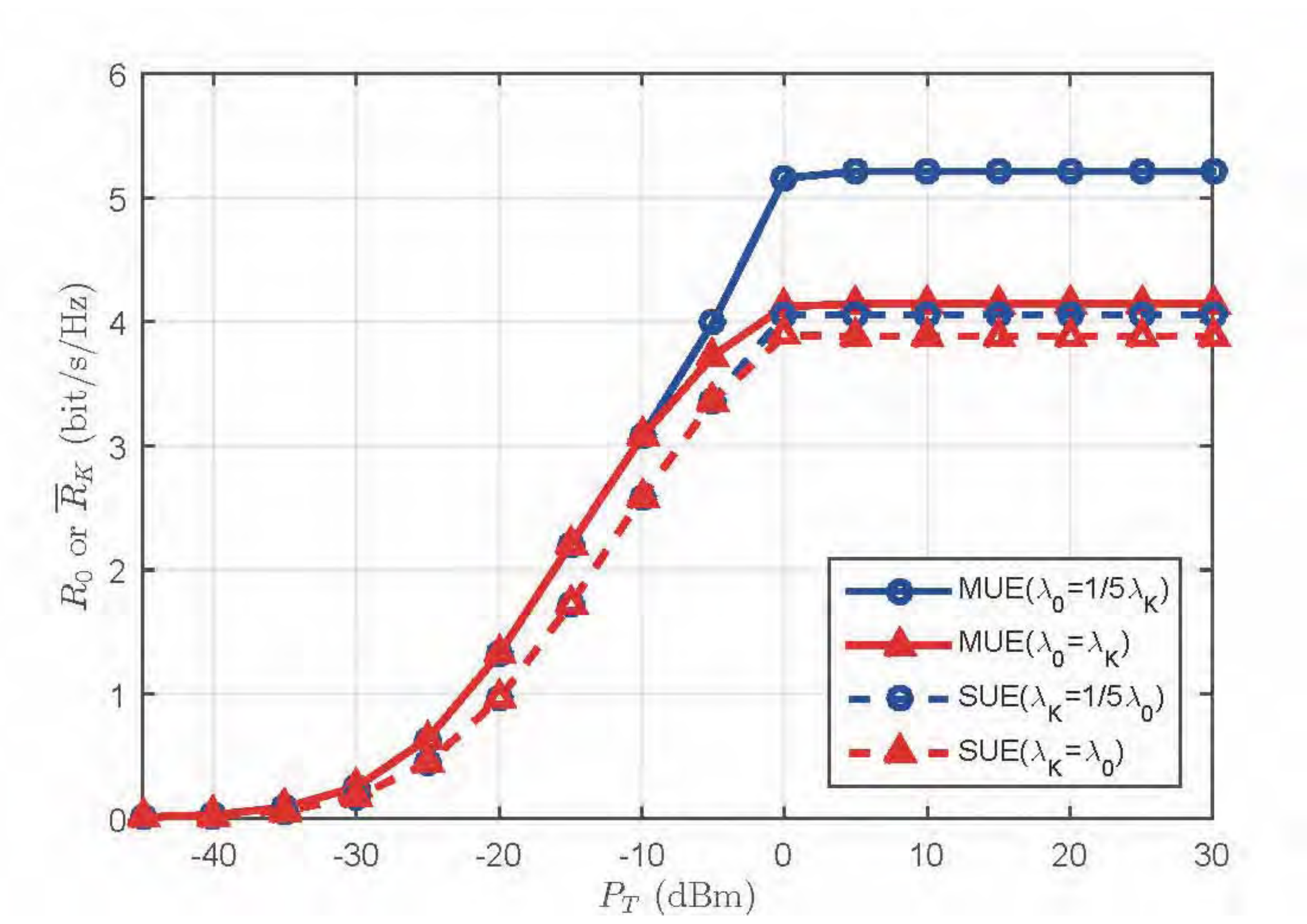}
\caption{The transmit rate versus $P_{T}$ with different $\lambda_0$ and $\lambda_K$.}
\label{fig8}
\end{figure}

\begin{figure}[!t]
\centering 
\includegraphics[width=0.53\textwidth]{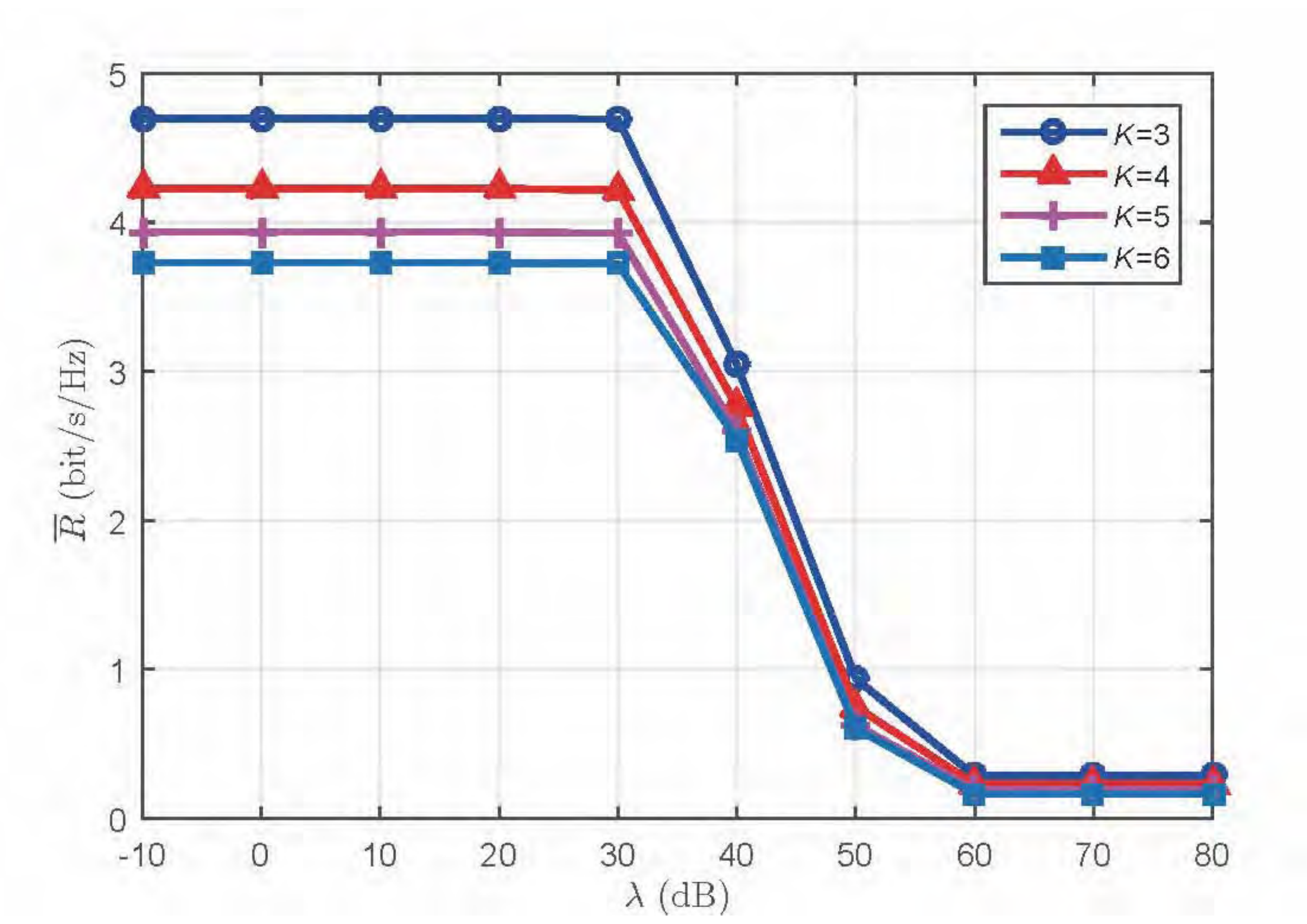}
\caption{The average transmit rate versus $\lambda$ with different $K$.}
\label{fig7}
\end{figure}

In Fig. \ref{fig9} and Fig. \ref{fig10}, we show the transmit power of the MUE and the average transmit power of the SUEs  versus $P_T$ with different $\lambda$ for $K=4$. As shown, the transmit power decreases with  $\lambda$. Moreover, the transmit power increases with  $P_T$ when $P_T$ is smaller than 0 dBm and remains approximately constant when $P_T$ is larger than 0 dBm. The reason for this result is that the maximum transmit power constraint will have no influence on the power control with a sufficiently large $P_T$.

\begin{figure}[!t]
\centering 
\includegraphics[width=0.52\textwidth]{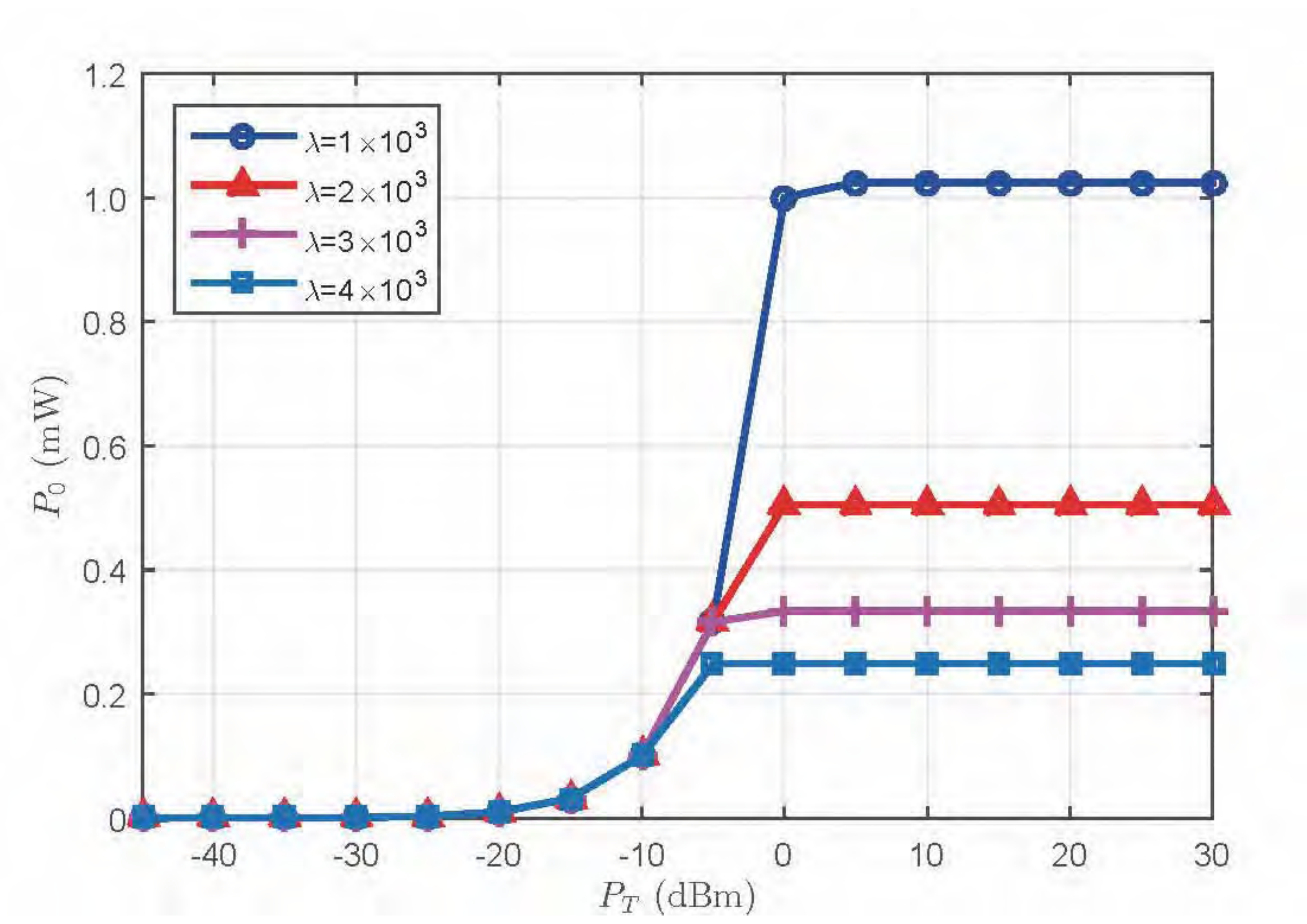}
\caption{The transmit power of the MUE versus $P_T$ with different $\lambda$.}
\label{fig9}
\end{figure}

\begin{figure}[!t]
\centering 
\includegraphics[width=0.52\textwidth]{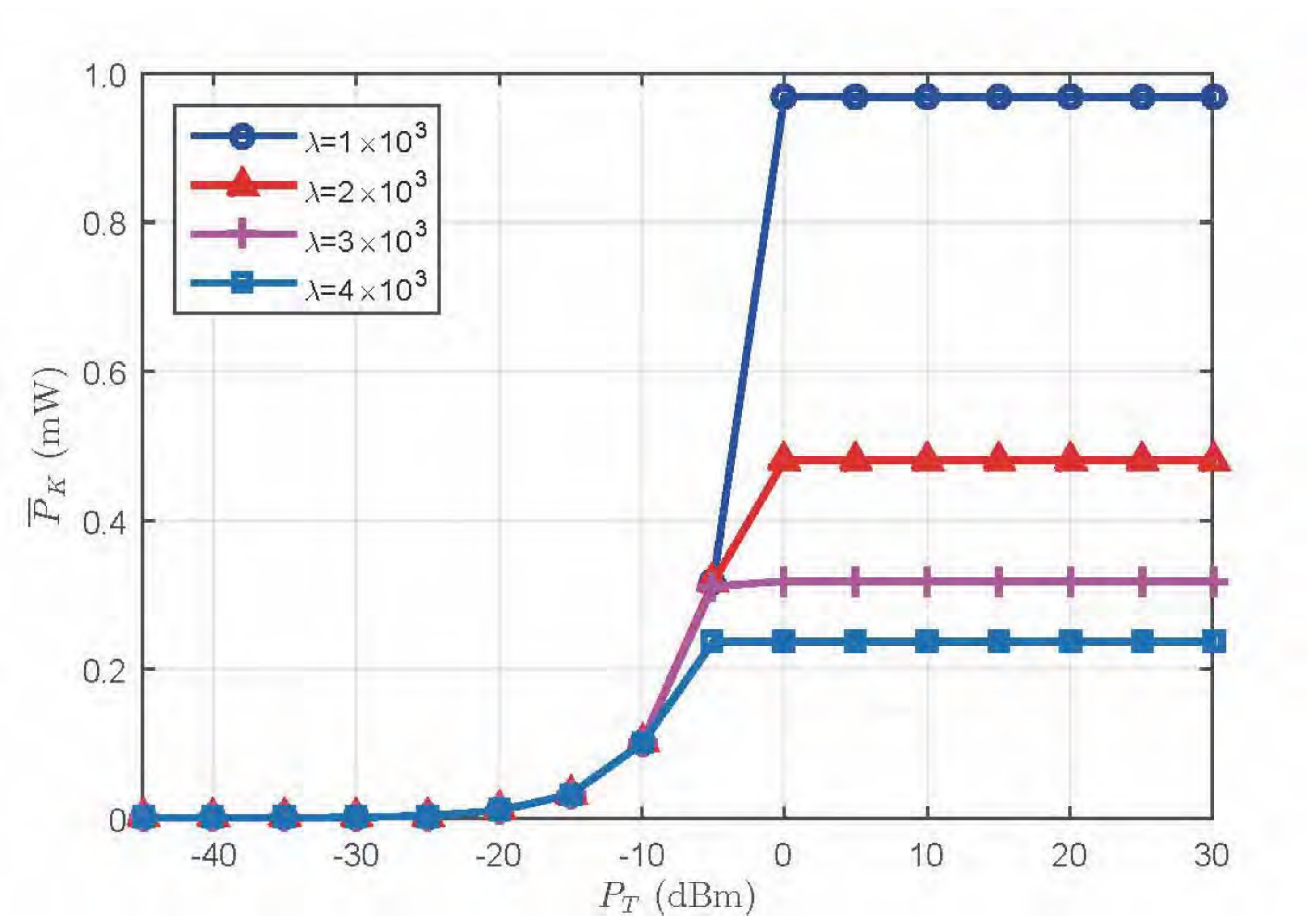}
\caption{The average transmit power of the SUEs versus $P_T$ with different $\lambda$.}
\label{fig10}
\end{figure}

\section{Conclusions}
In this paper, we have formulated a power control Stackelberg game for two-tier small-cell networks by considering both the transmit rate and cost.
The optimal transmit powers of the MUE and SUEs have been obtained based on the backward induction method.
We have developed a two-layer iterative power control scheme and  proven the convergence of this scheme. We have also shown the existence and uniqueness of the SE in the formulated Stackelberg game.
Numerical results have been presented to demonstrate the desirable performance of the proposed scheme.
For future work, we would like to explore power control with incomplete information for two-tier small-cell networks.

\section*{Data Availability}
The data used to support the findings of this study are included within the article.

\section*{Conflicts of Interest}
The authors declare that they have no competing interests.

\section*{Funding Statement}
This work was supported in part by
the Natural Science Foundation of China under Grant 61521061,
the Natural Science Foundation of Jiangsu Province under grant
BK20181264,
the Research Fund of the State Key Laboratory of
Integrated Services Networks (Xidian University) under grant ISN19-10,
the Research Fund of the Key Laboratory of Wireless Sensor Network $\&$ Communication (Shanghai Institute of Microsystem and Information Technology, Chinese Academy of Sciences) under grant 2017002,
the National Basic Research Program of China
(973 Program) under grant 2012CB316004,
and the U.K. Engineering and Physical Sciences Research Council under Grant EP/K040685/2.




\end{document}